\def\A{\mathcal{A}}
\def\M{\mathcal{M}}
\def\S{\mathcal{S}}
\def\Nin{N_{\text{in}}}
\def\Nout{N_{\text{out}}}
\def\x{\mathrm{x}}
\def\eps{\varepsilon}
\DeclareMathOperator*{\argmax}{arg\,max}
\DeclareMathOperator*{\argmin}{arg\,min}
\DeclareMathOperator*{\In}{Inf}
\def\etal{\emph{et~al.}\xspace}
\newtheorem{theorem}{Theorem}
\newtheorem{definition}{Definition}
\newtheorem{observation}{Observation}
\newtheorem{corollary}{Corollary}
\newcommand{\denselist}{\itemsep 0pt\parsep=1pt\partopsep 0pt}
\newcommand{\bitem}{\begin{itemize}\denselist}
\newcommand{\eitem}{\end{itemize}}
\newcommand{\benum}{\begin{enumerate}\denselist}
\newcommand{\eenum}{\end{enumerate}}
\newcommand{\veryshortarrow}[1][3pt]{\mathrel{%
   \vcenter{\hbox{\rule[-.2pt]{#1}{.4pt}}}%
   \mkern-4mu\hbox{\usefont{U}{lasy}{m}{n}\symbol{41}}}}
\newcommand{\xz}[2]{#1 \,{\veryshortarrow[0.15 cm]} #2}   
\newcommand{\pxz}[2]{P\left(\xz{#1}{#2}\right)}
\newcommand{\lp}[2]{x_{#1}(#2)}
\def\BibTeX{{\rm B\kern-.05em{\sc i\kern-.025em b}\kern-.08em
    T\kern-.1667em\lower.7ex\hbox{E}\kern-.125emX}}
\begin{document}

\title{On Privacy of Socially Contagious Attributes}

\author{\IEEEauthorblockN{Aria Rezaei, Jie Gao}
\IEEEauthorblockA{
    Stony Brook University\\
    \{arezaei,jgao\}@cs.stonybrook.edu
}}




\maketitle


\begin{abstract}
A common approach to protect users privacy in data collection is to perform random perturbations on user's sensitive data before collection in a way that aggregated statistics can still be inferred without endangering individual secrets.
In this paper, we take a closer look at the validity of Differential Privacy guarantees, when sensitive attributes are
subject to social contagion.
We first show that in the absence of any knowledge about the contagion network, an adversary that tries to predict the real values from perturbed ones, cannot train a classifier that achieves an area under the ROC curve (AUC) above $1-(1-\delta)/(1+e^\varepsilon)$, if the dataset is perturbed using an $(\varepsilon,\delta)$-differentially private mechanism. Then, we show that with the knowledge of the contagion network and model, one can do substantially better.
We demonstrate that our method passes the performance limit imposed by differential privacy.
Our experiments also reveal that nodes with high influence on others are at more risk of revealing their secrets than others.
Our method's superior performance is demonstrated through extensive experiments on synthetic and real-world networks.




\end{abstract}

%
%



\section{Introduction}
\label{sec:intro}


The last decade has witnessed the exponential growth of data collection practices.
While access to large-scale data has fueled the unprecedented power to solve problems previously thought impossible, it also imposes a great risk on the privacy of individuals in this new environment.
A common policy is to consider individual data items to be sensitive, while knowledge of aggregated statistics on a population is not. For example, the fact that a person has a certain disease is considered sensitive, while it is safe to release the percentage of people with that disease within a population.
This model has been the foundation of the popular differential privacy (DP) framework~\cite{Dwork2014-pq}, in which individual entries are sensitive but queries on aggregated knowledge are answered with the guarantee that an adversary cannot use the answers to accurately infer individual data items.

In this paper we examine the interplay of personally sensitive data in a social environment.
It has been widely recognized that social interactions shape the landscape of individual attributes -- infectious diseases spread through social interactions and contacts; behavior changes such as
obesity~\cite{Cohen-Cole2008-gi}, 
exercising~\cite{Hamari2015-ii},
or decision making processes such as 
voting~\cite{Bond2012-ym} or charity donation~\cite{Meer2011-rt} are contagious.

Due to the ubiquity of online social platforms in recent years, information about social ties and social interaction has become available. Such data can be available to the public with little effort (e.g.: professional affiliation on public web pages or
friendship networks in public social networks such as Twitter), or can be mined through other means such as human mobility traces~\cite{Wang2011-sy}. So the question we ask is: \emph{how safe are people's sensitive attributes in a socially connected world?}

\smallskip\noindent\textbf{Our Contribution.}
In this paper we answer this question by proposing a novel attack to users' sensitive attributes using information on social network connectivity, despite the fact that attributes are protected by DP mechanisms. 

Suppose that the individuals participate in a survey in which they are asked about the sensitive attribute $X$ with value $0$ or $1$. The goal of the survey is to learn the aggregated percentage of population who report ``1''. Since the participants may not trust the data collector, they use a randomized perturbation mechanism $\M$ to report data $z_i=\M(\x_i)$. A simple scheme for $\M$ is to flip a coin. If head, report $1$ or $0$ at random, otherwise report the true value. After aggregating the perturbed reports, one can approximate the true statistics by removing certain biases introduced by $\M$.
For example, if there are $p$ fraction in the population whose attributes are $1$, the perturbation mechanism leads to a total fraction of $1/4+p/2$ reporting $1$. From this, one can solve for $p$. Meanwhile, knowing $z_i$ is not enough to accurately determine $\x_i$ -- such protection can be formulated by differential privacy guarantee. 

Now assume that the attacker knows the social connections between individuals in the survey as well as the contagion model (how this attribute spreads through the social ties).
We propose an attack that exploits this information to infer the initial state $\{\x_i\}$, with a performance bound exceeding that which is guaranteed by DP. More accurately, we show \begin{itemize}
    \item For \emph{any} perturbation mechanism $\M$ that guarantees $(\eps, \delta)$-differential privacy, i.e., $\forall z, \x \neq \x'$,  
    \begin{equation}
    \Pr[\M(\x) = z] \leq e^\varepsilon \Pr[\M(\x') = z] + \delta,\end{equation}
    the best classifier from an attacker, \emph{without} information of the social ties and the contagion models, has the Area Under the ROC Curve (AUC) at most $1 - (1-\delta)/(e^\varepsilon + 1)$. 
    \item Next, we propose a method to infer the original sensitive values $\{\x_i\}$, using the contagion model, the network structure, the perturbation mechanism $\M$ and the noisy reported values $\{z_i\}$. This requires understanding how the real values correlate by accurately modeling the way they are produced by a contagion process.
    In prior work, contagions are ignored in modeling correlation between individual values, which results in models that are too simplistic to reflect real-world phenomena. In contrast, our model incorporates the network structure and contagion model directly into our calculations.

    We proceed in two phases. First, we find the probabilistic \emph{effectors} -- a probability $\alpha_v$ that each node $v$ is an initiator of a contagion that results in
    observed $\{z_i\}$.
    Next we run the contagion model forward from the probabilistic seeds to estimate $\{\x_i\}$. 
    
\item Our experiments on both synthetic and real-world networks show that our method can achieve an AUC value higher than the limit imposed by DP, weakening its guarantee as a result. This also means that the social network information, while not sensitive itself, can indeed be exploited to infer sensitive knowledge. We also observe that nodes with high influence over others are more vulnerable to such attacks.
\end{itemize}

In what follows, we first present the background and related work on DP and its many variants along with prior work on social influence and contagions. We then report the theoretical upper bound on the performance of a binary classification with differential privacy protection, when there is no knowledge of the contagion network. This is followed by our attack leveraging social network structure and contagion model. We report the results of our experiments at the end.
\section{Background and Related Work}\label{sec:background}


\smallskip\noindent\textbf{Contagion Models}:
Many attributes are socially contagious. The way that these attributes spread in a social network is described via contagion models. A few models have received great attention in the literature. 
In the \emph{Linear Threshold} model each edge has a weight that represents the influence between nodes, and nodes are activated when the sum of influence receiving from their neighbors exceeds a threshold randomly selected from $[0,1]$.
\emph{Independent Cascade} model assumes that each node $u$, upon acivation, has \emph{one} chance to activate each of its neighbors, with different probabilities.
This differs from prior work on virus contagion such as SI (susceptible-Infected)
where activated node \emph{continuously} try to activate their inactive neighbors in time-synchronous rounds. 
Recently there has been growing attention in \emph{General Threshold} model~\cite{Gao2016-gy} (first proposed by Granovetter~\cite{Granovetter1973-qm}) and \emph{Complex Contagions}, where infection requires a specific number of infected neighbors~\cite{Centola2007-vj,Ghasemiesfeh2013-bl}.
In this work we mainly use the Linear Threshold model and discuss possible extensions in the last section. 

\smallskip\noindent\textbf{Data Privacy}:
The most widely adopted privacy model is the model of differential privacy (DP)~\cite{Dwork2014-pq}, which imposes constraints on publishing aggregate information about a database such that the privacy impact on individual entries is limited. Specifically, a randomized algorithm $\A$ that takes a dataset as input is said to have $(\eps, \delta)$-differential privacy, if for all datasets $D_1$ and $D_2$ that differ on a single entry, and all subsets $S$ of the image of $\A$: $\Pr[\A(D_1) \in S] \leq e^\varepsilon \Pr[\A(D_2) \in S] + \delta$.
The probability is taken over the randomness of the algorithm. 


The original DP model does not explicitly specify the ramifications of the presence of correlation between data points, which could be accessible from outside. In fact, it is proved that when the data is assumed to be correlated, the privacy guarantees provided by DP becomes weaker~\cite{Kifer2014-bj}.
This issue is acknowledged in a number of later definitions that try to address it. For example, inferential privacy~\cite{Ghosh2016-gs} captures the largest possible ratio between the posterior and prior beliefs about an individual's data, after observing the results of a computation on a database. Here the data items may not be independent and the correlation is captured by a prior belief on the data items. 
In adversarial privacy~\cite{Rastogi2009-lx}, domain experts could plug in various data generating distributions and the goal is to protect the presence/absence of a tuple in the data set. 
The most general definition is PufferFish privacy~\cite{Kifer2014-bj}, where one explicitly specifies the set of secrets to protect, how they shall be protected (by specifying indistinguishable pairs), how data evolve or are generated (e.g., are data items correlated), and what extra knowledge the potential attackers have.
These definitions add to the complexity of DP and, as a result, we have yet to see any of them being as widely adopted as DP.

Our work could be considered as a motivation to devise mechanisms that explicitly incorporate contagion models into their privacy protection guarantees.
In an age when activities are shared online and individuals interact with each other more as each day passes, it is not unimaginable that an adversary might have access to social information that can jeopardize individuals' secrets.
We propose a concrete attack that beats DP guarantees. Since many attributes shared by humans are socially correlated, this work reopens many of the problems studied under traditional DP setting (with no explicit assumption on data correlation and generation processes) and forces us to inspect new methods that can endure higher levels of scrutiny.


\smallskip\noindent\textbf{Analyzing Social Contagions and Finding Effectors}: 
Our attack is closely related to analyzing social contagions, in particular the two problems of influence maximization and finding roots of contagion. 

Influence maximization is initially studied in~\cite{kempe2003maximizing}: how to pick $k$ initial seeds such that the number of nodes eventually infected is maximized. It is an NP-hard problem but could be approximated up to $1-1/e$, if one can have an oracle for computing the \emph{influence} of a set of seeds $S$ -- the (expected) number of nodes infected with seed set $S$. Obviously one can run simulations to estimate the influence of a seed set. Computing the exact influence of a node can be done in linear time on a DAG but is $\#$P-hard on a general graph~\cite{Chen2010-lt}.
A heuristic to speed-up the algorithm is to utilize local simple structures, such as local DAGs, to estimate the influence of a node~\cite{Chen2010-lt}. 
Alternatively, Borgs \etal~\cite{Borgs2013-dz} proposed to use the reverse cascades to estimate influence, picking nodes that more frequently appear in cascades simulated in reverse direction.

Given the current activation state of a contagion in a social network, the \emph{$k$-effector problem} is to find the most likely $k$ effectors (initiators of contagions)
You can see that influence maximization is a special case of this problem where all nodes are activated in the end.
The $k$-effector problem is NP-hard for general graphs or even a DAG, but is solvable in polynomial time by dynamic programming on trees~\cite{lappas2010kEffectors}. For general graphs, a heuristic algorithm~\cite{Luo2013-ku} is to extract the most probable tree (which is NP-hard) and run the optimal algorithm on that tree. 
Finding effectors is also extensively studied for the Susceptible-Infected (SI) propagation model~\cite{Nguyen2016-dt, Prakash2012-eu}.

\smallskip\noindent\textbf{Privacy of Social Networks and Attributes}:
Our work is different from previous work on protecting social network privacy, which assumes that the social network graph itself is private data and network-wide statistics (e.g., degree distribution) is released~\cite{Task2012-hi}. We assume that the social network structure is publicly available and only the socially contagious attributes are sensitive.

Links between individuals in a social network can be telling. For instance,
Kifer and Machanavajjhala~\cite{Kifer2011-pu} show that future social links can be predicted from the number of inter-community edges by assuming that network evolution follows some particular model. Somewhat similar to our work, Song \etal~\cite{Song2017-xf} considered flu infection -- estimating how many people get flu while preventing the status of any particular individual being revealed. To avoid the intricate details of social contagion, they assume an overly simplistic model where all nodes in the same connected component are correlated and in each component, all pairs of nodes are equally correlated.
In our work, we assume a contagion model that is aligned with established literature on contagion and social influence. 



\def\M{\mathcal{M}}
\def\S{\mathcal{S}}
\def\Nin{N_{\text{in}}}
\def\Nout{N_{\text{out}}}
\def\x{\mathrm{x}}
\def\X{X}
\def\Z{Z}

\section{Problem Definition}\label{sec:problem}

For a population of $n$ individuals, let $\x_i$ be a sensitive binary attribute for individual $i$, $\x_i \in \{0,1\}$, and denote by $\X$ the set of all values $\langle \x_1, \cdots, \x_n\rangle$.
We assume that this attribute is contagious and propagates over a directed network $G(V, E)$ following the Linear Threshold cascade model.
In this model, each edge has a weight $w(u,v) \in (0, 1]$ which represents the influence that node $u$ exerts on node $v$. Each node also has a threshold $\lambda_v$ which is selected \emph{uniformly at random} from $(0, 1]$. 
If the sum of influence from infected in-going neighbors goes beyond $\lambda_v$, $v$ becomes activated in the next round. 
Assuming that the set of activated nodes, $A$, is not empty at time $0$, we can build it iteratively at every step via the following rule:
\begin{equation}\label{eq:linear_threshold}
    A \leftarrow A \cup \big\{ v \in V \setminus A \colon \sum_{u \in \Nin(v) \cap A} w(u,v) \geq \lambda_v \big\}.
\end{equation}
Here $\Nin(v)$ is the set of neighbors with edges pointing to $v$ (i.e., imposing influence on $v$).
The process proceeds until $A$ stops growing.

Imagine that these individuals participate in a survey in which they are each asked about their sensitive attribute $\x_i$. The goal of this survey is to calculate some aggregate statistic, e.g., the percentage of individuals having attribute $1$.
To avoid revealing their secrets, they could
use a randomized perturbation mechanism, $\M\colon \{0,1\} \rightarrow \{0,1\}$, and use the resulting values to answer the survey. The observed answer of participants is the sequence $\langle z_1, \cdots, z_n \rangle$, denoted by $\Z$, where $z_i = \M(\x_i)$. Assume that $\M$ guarantees $(\varepsilon,\delta)$-DP, i.e.,
\begin{equation}\label{eq:dpbound}
    \forall z, \x \neq \x' \colon \Pr[\M(\x) = z] \leq e^\varepsilon \Pr[\M(\x') = z] + \delta.
\end{equation}

In this paper we want to examine two problems:
\bitem
    \item What is the performance of the best classifier, using only information in $\Z$ and $\M$, to infer the true values $\X$?
    \item If we also know the contagion network $G$ and the contagion model, can we
    perform
    better? In other words, how much more information is revealed by knowing the social structure and the way this sensitive attribute propagates in the network?
    The difference from the answer to the earlier question is the loss of privacy.
\eitem

\section{Limitations of Binary Classification with Differential Privacy}\label{sec:limitation}
To show that the presence of the underlying contagion network provides
essential information that can pose a real threat to privacy, we study the limits of binary classification given \emph{only} the
reported values ($\Z$) and the randomization parameters of $\M$. This is a fair assumption since the real values, $\X$, are never disclosed but $\Z$ is, and $\M$ is known to all participants.

A classifier scores and subsequently ranks the participants based on their likelihood of having $\x = 1$. We measure the success of such ranking by the probability that a randomly selected sample with $\x = 1$ (a positive sample) is ranked higher than a randomly selected sample with $\x=0$ (a negative sample).
This is known to be the area under the receiver operating characteristic curve (ROC curve) in an unsupervised classification problem, namely the AUC value~\cite{roc_curve}.
\begin{theorem}\label{theo:roc}
Any classification attempt by an adversary, having access to only $\Z$ and $\M$, will have an Area Under the ROC Curve (AUC) at most $1 - (1-\delta)/(e^\varepsilon + 1)$.
\end{theorem}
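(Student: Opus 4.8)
The plan is to exploit the fact that, in the network-free setting, the only individual-specific evidence available to the adversary about participant $i$ is the single reported bit $z_i\in\{0,1\}$. Any score the classifier assigns to $i$ is therefore a function of $z_i$ alone and takes at most two distinct values, so the induced ranking merely separates the reporters of $1$ from the reporters of $0$. I would first reduce the AUC to a two-parameter quantity by writing $a=\Pr[\M(1)=1]$ and $b=\Pr[\M(0)=1]$. A positive sample reports $1$ with probability $a$ and a negative sample with probability $b$, and because the perturbation noise is independent across individuals, the two reports being compared are independent. Counting ties with weight $1/2$ as in the definition of AUC, a case analysis over the four outcomes $(z_+,z_-)\in\{0,1\}^2$ gives, for the ranking that places reporters of $1$ above reporters of $0$,
\begin{equation}
\mathrm{AUC}=a(1-b)+\tfrac12\,ab+\tfrac12\,(1-a)(1-b)=\tfrac12+\tfrac{a-b}{2}.
\end{equation}
The opposite ranking yields $\tfrac12-\tfrac{a-b}{2}$, so the best adversary achieves $\mathrm{AUC}=\tfrac12+\tfrac{|a-b|}{2}$, and it remains to bound $|a-b|$ using only the fact that $\M$ is $(\eps,\delta)$-DP.

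Next I would extract the constraints that \eqref{eq:dpbound} places on $a$ and $b$. Applying the bound to the event $z=1$ gives $a\le e^\varepsilon b+\delta$ and $b\le e^\varepsilon a+\delta$, while applying it to $z=0$ gives $1-a\le e^\varepsilon(1-b)+\delta$ and $1-b\le e^\varepsilon(1-a)+\delta$. Assuming without loss of generality that $a\ge b$, maximizing $a-b$ over $(a,b)\in[0,1]^2$ is a small linear program whose only relevant constraints are $a\le e^\varepsilon b+\delta$ and $1-b\le e^\varepsilon(1-a)+\delta$. Setting both tight and solving the resulting $2\times2$ system yields $b=(1-\delta)/(e^\varepsilon+1)$ and $a=(e^\varepsilon+\delta)/(e^\varepsilon+1)$, hence
\begin{equation}
|a-b|\le \frac{e^\varepsilon-1+2\delta}{e^\varepsilon+1}.
\end{equation}
I would confirm that this vertex is the true maximizer by a normal-cone check: the objective direction $(1,-1)$ must be a nonnegative combination of the two active constraint normals $(1,-e^\varepsilon)$ and $(e^\varepsilon,-1)$, which holds with equal multipliers $1/(e^\varepsilon+1)$. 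I would also verify $a,b\in[0,1]$ for $\delta\in[0,1]$, so the optimum is feasible.

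Substituting this bound into the AUC expression then gives
\begin{equation}
\mathrm{AUC}\le \frac12+\frac12\cdot\frac{e^\varepsilon-1+2\delta}{e^\varepsilon+1}=\frac{e^\varepsilon+\delta}{e^\varepsilon+1}=1-\frac{1-\delta}{e^\varepsilon+1},
\end{equation}
which is exactly the claimed bound; note that equality is attained by randomized response, showing the bound is tight. I expect the main obstacle to be justifying the first reduction rigorously, namely that no classifier using the entire vector $\Z$ can outperform a per-bit rule: one must argue that, absent the contagion network, the reports of other participants are independent noise that carries no usable signal about $\x_i$, so the AUC-optimal ranking scores each node by its own bit $z_i$, and that randomizing scores within a fixed-$z$ group cannot raise the AUC. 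Once that reduction is in place, the optimization is routine, with the only care needed being the selection of the correct pair of binding DP inequalities and the feasibility check at the extremes of the $\delta$ range.
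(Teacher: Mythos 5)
Your proposal is correct and follows essentially the same route as the paper: you reduce the optimal AUC to $\tfrac12+\tfrac{|a-b|}{2}$ with $a=\pxz{1}{1}$, $b=\pxz{0}{1}$ (identical to the paper's $\bigl(\pxz{1}{1}+\pxz{0}{0}\bigr)/2$ after its Observation that a deterministic ranking is optimal), and then invoke the same two binding $(\eps,\delta)$-DP inequalities, which the paper simply adds rather than solving as a tight $2\times2$ system. The reduction you flag as the main obstacle (that scores depend only on each node's own bit) is likewise assumed, not proved, in the paper's argument, so your treatment is on equal footing.
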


\def\prank{P(R_1 > R_0)}

Recall that the ROC curve of a classifier is plotting the true positive rate (TPR) against the false positive rate (FPR) at various threshold settings. AUC can be understood as the probability that the classifier ranks $R_1$ higher than $R_0$, denoted by 
$\prank$, where $R_1$ ($R_0$) is a randomly chosen positive (negative) sample, with $\x=1$ ($\x=0$). 

Suppose we take a positive (negative) sample $\x$ ($\x'$) and the perturbation mechanism $\M$ produces a perturbed value $z$ ($z'$). Let's denote $\Pr(\M(\x) = z)$ by $\pxz{\x}{z}$ for brevity. Let $\S_1$ and $\S_0$ be two distributions over $(-\infty, +\infty)$ from which a score is drawn, 
if $z = 1$ or $z = 0$ respectively.
For the perturbed value $z, z'$, the classifier chooses a score $s, s'$ from $S_z, S_{z'}$ respectively and the ranking is produced based on the scores. Denote by $\gamma(z,z')$ the probability that $s$ is higher than $s'$, i.e., $\Pr[s>s' | s\sim \S_z, s' \sim \S_{z'}]$. Obviously,
\begin{equation*}
\Pr[s = s' | s\sim \S_z, s' \sim \S_{z'}] = 1 - \gamma(z,z') - \gamma(z',z). 
\end{equation*}
Then we can write $\prank$ as (Section 2 of ~\cite{roc-dinasour}):
\begin{equation*}
\sum_{z, z' \in \{0,1\}}
\pxz{1}{z}\pxz{0}{z'}
\big(P(s > s') + \frac{1}{2} P(s = s')\big).
\end{equation*}
Continuing the above, we have:
\begin{align}
\prank \nonumber\\
    ={}&\sum_{z,z'} \pxz{1}{z}\pxz{0}{z'}
    \left( \frac{1 + \gamma(z,z') - \gamma(z',z)}{2}\right) \nonumber\\
    ={}& \frac{1}{2}\sum_{z,z'} \pxz{1}{z}\pxz{0}{z'}
    \big(\gamma(z,z') - \gamma(z',z)\big) + \frac{1}{2} \nonumber\\
    ={}& \Big(\pxz{1}{1}\pxz{0}{0} - \pxz{0}{1}\pxz{1}{0}\Big)\cdot \nonumber \\
    {}& \qquad \qquad  \frac{\gamma(1,0) - \gamma(0,1)}{2} + \frac{1}{2}
\label{eq:p_m}
\end{align}


\begin{observation}\label{lem:opt_gamma}
Let $\gamma^*$ be the one maximizing AUC, i.e.,   $\argmax_{\gamma} \prank$. Then, $\gamma^*(1,0) = 1$, if
\begin{equation}\label{eq:condition}
    \pxz{1}{1}\pxz{0}{0} > \pxz{1}{0}\pxz{0}{1}
\end{equation}
and $0$ otherwise. 
\end{observation}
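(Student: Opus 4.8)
The plan is to treat $\prank$, as given in \eqref{eq:p_m}, as an optimization problem in the classifier's free parameters and exploit that the objective is affine in them. The only way the classifier enters \eqref{eq:p_m} is through the two quantities $\gamma(1,0)$ and $\gamma(0,1)$; the prefactor $\pxz{1}{1}\pxz{0}{0} - \pxz{0}{1}\pxz{1}{0}$ is fixed once $\M$ is chosen. Writing $C := \pxz{1}{1}\pxz{0}{0} - \pxz{0}{1}\pxz{1}{0}$, the objective becomes $\prank = \tfrac{C}{2}\big(\gamma(1,0) - \gamma(0,1)\big) + \tfrac12$, so maximizing AUC reduces to maximizing $\gamma(1,0) - \gamma(0,1)$ when $C>0$ and minimizing it when $C<0$.

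Next I would pin down the feasible region for the pair $\big(\gamma(1,0), \gamma(0,1)\big)$. By definition $\gamma(1,0) = \Pr[s>s']$ and $\gamma(0,1) = \Pr[s'>s]$ under independent draws $s \sim \S_1$, $s' \sim \S_0$. These are the probabilities of the two mutually exclusive events $\{s>s'\}$ and $\{s<s'\}$, so both are nonnegative and $\gamma(1,0) + \gamma(0,1) = 1 - \Pr[s=s'] \le 1$. Hence every achievable pair lies in the triangle $T$ with vertices $(0,0)$, $(1,0)$, and $(0,1)$.

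Then I would maximize the linear objective over $T$. A linear function on a polytope attains its maximum at a vertex, so it suffices to compare the three vertices. When $C>0$ the unique maximizer is $(1,0)$, giving $\gamma^*(1,0)=1$ and $\gamma^*(0,1)=0$; when $C<0$ it is $(0,1)$, giving $\gamma^*(1,0)=0$; and when $C=0$ the objective is constant, so we may take $\gamma^*(1,0)=0$ by convention. Since the condition $C>0$ is exactly \eqref{eq:condition}, this produces the claimed case split.

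The one step that needs genuine care — the main obstacle — is verifying that the extreme vertices of $T$ are actually realizable by legitimate score distributions, so that the supremum is attained rather than merely approached. For this I would exhibit explicit distributions: taking $\S_1$ to be a point mass at $1$ and $\S_0$ a point mass at $0$ forces $s>s'$ almost surely, giving $\big(\gamma(1,0),\gamma(0,1)\big) = (1,0)$, while swapping the two masses gives $(0,1)$. This confirms both extremes are attainable by a valid classifier, which closes the argument.
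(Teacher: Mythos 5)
Your proposal is correct and follows essentially the same route as the paper: the paper simply asserts that the claim is ``clear from the right hand side of \eqref{eq:p_m}'', i.e., from the fact that $\prank$ is affine in $\gamma(1,0)-\gamma(0,1)$ with the sign of the coefficient determined by \eqref{eq:condition}. Your elaboration (identifying the feasible triangle for $\bigl(\gamma(1,0),\gamma(0,1)\bigr)$, optimizing the linear objective at a vertex, and checking attainability via point-mass score distributions) is a careful spelling-out of exactly that one-line argument.
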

The above is clear from the right hand side of \eqref{eq:p_m}.
This shows that an optimal AUC is achieved by a \emph{deterministic} classification rule based solely on the condition in Observation~\ref{lem:opt_gamma}. 
\begin{corollary}\label{cor:bayesian}
Bayesian inference achieves optimal AUC.
\end{corollary}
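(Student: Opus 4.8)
The plan is to identify the Bayesian classifier with the scoring rule that assigns to each individual its posterior probability $\Pr[\x=1 \mid z]$ of being positive given the observed report $z$, and then to show that the ranking this rule induces coincides with the optimal deterministic rule characterized in Observation~\ref{lem:opt_gamma}. Since $z \in \{0,1\}$, the Bayesian score takes only two values, $s(1)=\Pr[\x=1\mid z=1]$ and $s(0)=\Pr[\x=1\mid z=0]$, so the whole corollary reduces to checking that the ordering of $s(1)$ and $s(0)$ agrees with the sign of the coefficient $\pxz{1}{1}\pxz{0}{0}-\pxz{1}{0}\pxz{0}{1}$ appearing in \eqref{eq:p_m}.

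First I would write each posterior via Bayes' rule against a prior $\pi_1=\Pr[\x=1]$, $\pi_0=1-\pi_1$, so that
\[\Pr[\x=1\mid z]=\frac{\pxz{1}{z}\,\pi_1}{\pxz{1}{z}\,\pi_1+\pxz{0}{z}\,\pi_0}.\]
Comparing $s(1)$ with $s(0)$ is cleanest through the posterior odds: applying the increasing map $t\mapsto t/(1-t)$ gives odds $\pxz{1}{1}\pi_1/(\pxz{0}{1}\pi_0)$ and $\pxz{1}{0}\pi_1/(\pxz{0}{0}\pi_0)$ respectively. The prior ratio $\pi_1/\pi_0$ cancels, and cross-multiplying the remaining positive quantities yields $s(1)>s(0)$ exactly when $\pxz{1}{1}\pxz{0}{0}>\pxz{1}{0}\pxz{0}{1}$, i.e.\ precisely the condition \eqref{eq:condition} of Observation~\ref{lem:opt_gamma}.

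It then remains to feed this into \eqref{eq:p_m}. Because the Bayesian score is deterministic given $z$, we have $\gamma(1,0)=\mathbf{1}[s(1)>s(0)]$ and $\gamma(0,1)=\mathbf{1}[s(0)>s(1)]$, so $\gamma(1,0)-\gamma(0,1)$ equals $+1$ when the above inequality holds and $-1$ when it is reversed. This is exactly the maximizing choice $\gamma^*$ identified in Observation~\ref{lem:opt_gamma}, so in either case the product in \eqref{eq:p_m} is nonnegative and as large as it can be, whence the Bayesian AUC equals the maximum. Within-group comparisons ($z=z'$) are pure ties and contribute $\tfrac12$ to $\prank$ regardless, so they do not affect the argument.

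The main thing to get right is this reduction via odds together with the cancellation of the prior: one must verify that the monotonicity of $t\mapsto t/(1-t)$ legitimately transfers the comparison of posteriors to the comparison of likelihoods, and that the prior $\pi_1$ (as long as it is not degenerate at $0$ or $1$) plays no role in the induced ordering. The only remaining loose end is the boundary case $\pxz{1}{1}\pxz{0}{0}=\pxz{1}{0}\pxz{0}{1}$, where both rules give $s(1)=s(0)$ and $\prank=\tfrac12$, so optimality holds trivially.
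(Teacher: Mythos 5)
Your proof is correct and follows essentially the same route as the paper: both arguments use Bayes' rule to show that the posterior ordering $P(\x=1\mid z=1)>P(\x=1\mid z=0)$ is equivalent to condition~\eqref{eq:condition}, so the Bayesian classifier realizes the optimal deterministic rule $\gamma^*$ of Observation~\ref{lem:opt_gamma}. Your detour through posterior odds (rather than the paper's substitution of $\pxz{0}{z}$ via the law of total probability) is only a cosmetic difference in the algebra, and your explicit handling of the tie case is a harmless addition.
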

\begin{proof}
By Bayes' rule we have:
\begin{equation*}
    \pxz{0}{z} = \frac{P(z) - \pxz{1}{z}P(\x = 1)}{P(\x = 0)}.
\end{equation*}
Using the above, we can substitute $\pxz{0}{1}$ and $\pxz{0}{0}$ in \eqref{eq:condition}. After canceling out phrases from both sides, we have:
\begin{align*}
    \frac{\pxz{1}{1}P(z=0)}{P(\x=0)} >{}& \frac{\pxz{1}{0}P(z=1)}{P(\x=0)}\\
    \frac{P(\x=1,z=1)}{P(\x=1)P(z=1)} >{}&
    \frac{P(\x=1,z=0)}{P(\x=1)P(z=0)} \\
    P(\x=1 \mid z=1) >{}& P(\x=1 | z = 0).
\end{align*}
The proof is symmetrical for the reverse inequality.
\end{proof}
We can now prove Theorem~\ref{theo:roc}. Without loss of generality, we assume that the condition in Lemma~\ref{lem:opt_gamma} holds, we can then further simplify \eqref{eq:p_m} as below:
\begin{align}
    \Pr(R_1 > R_0) &= 
    \frac{1}{2} + \frac{1}{2}\Big(\pxz{1}{1} + \pxz{0}{0} - 1\Big) \nonumber\\
    &= \frac{\pxz{1}{1} + \pxz{0}{0}}{2}
\end{align}
By the $\varepsilon$-DP guarantees we have:
\begin{gather}\label{eq:dp_bounds}
    \pxz{1}{1} + e^\varepsilon \pxz{0}{0} \leq e^\varepsilon + \delta \nonumber\\
    \pxz{0}{0} + e^\varepsilon \pxz{1}{1} \leq e^\varepsilon + \delta,
\end{gather}
As a result, we have:
\begin{equation}
    \Pr(R_1 > R_0) \leq 1 - \frac{1 - \delta}{1 + e^\varepsilon}.
\end{equation}
Thus, Theorem~\ref{theo:roc} is proved. Note that the bound is realized if the inequalities in Equation~\eqref{eq:dpbound} become equality. This theorem shows that \emph{if} this bound is significantly surpassed, the guarantee of $\varepsilon$-DP no longer holds.

\section{Algorithm}\label{sec:method}

\subsection{Objective Function}
The goal is to infer $P(\x_v = 1)$ for all $v$. We denote this probability by $x_v$ throughout this paper (note the difference between $\x$ and $x$). To do this, we first find the initial seeds of contagion, then calculate the corresponding $x_v$. Our solution is hence an arrangement of probabilities of each node $v$ being initially active, denoted by $\alpha_v$.
Rather than a fixed number of most likely seeds,
we seek to find a \emph{distribution} of initial seeds that are likely to produce the observed $\Z$.
This is shown to significantly boost our performance.
We now define the main objective for our problem.
\begin{definition}[Symmetric Difference]
Given two instances of reports, $Z_1$ and $Z_2$, we define their \emph{Symmetric Difference} by:
\begin{equation}\label{eq:sym_dif}
    D(Z_1, Z_2) = |Z_1\setminus Z_2| + |Z_2 \setminus Z_1|.
\end{equation}
\end{definition}
Let $\alpha$ be an assignment of the initial activation probabilities. Suppose that $\mathcal{C}$ is the distribution of all possible cascades, $C$, and $\mathcal{R}$ is the distribution of all possible reports, $\Tilde{Z} = (\Tilde{z}_1, \cdots, \Tilde{z}_n)$. Then, the \emph{expected} symmetric difference between $\Tilde{Z}$ and the originally observed values, $Z$, will be as below:
\begin{align}
&     \mathbb{E}\left[D(\Tilde{Z}, Z)\right]\nonumber \\
    ={}&
        \sum_{\Tilde{Z} \sim \mathcal{R}}\Pr(\Tilde{Z})D(\hat{Z},Z)\nonumber\\
        ={}& \sum_{C \sim \mathcal{C}} \Pr(C)\sum_{v \in V}\Pr(\Tilde{z}_v \neq z_v \mid C)\nonumber\\
        ={}& \sum_{C \sim \mathcal{C}} \sum_{v \in V}\sum_{\x \in \{0,1\}} \Pr(\Tilde{z}_v \neq z_v \mid \x_v = \x)\Pr(\x_v = \x \mid C)
       \nonumber\\
        ={}& \sum_{v\in V}\sum_{\x \in \{0, 1\}}\Pr(\Tilde{z}_v \neq z_v \mid \x_v = \x)\Pr(\x_v = \x)\nonumber\\
        ={}& \sum_{v}\big(\pxz{1}{{\sim} z_v}
        - \pxz{0}{{\sim}z_v}
        \big)x_v + \pxz{0}{{\sim}z_v}
        \label{eq:exp_sym_dif}
\end{align}
In the above ${\sim}z_v = 1 - z_v$, and the last line is due to $\Pr(\x_v = 0) = 1 - x_v$.
We define our objective function as $f = \mathbb{E}\big[D(\hat{Z}, Z)\big]$ and find an $\alpha$ that minimizes $f$. Since $\pxz{0}{{\sim}z_v}$ is a constant, we can further simplify $f$ as
\begin{equation}\label{eq:f}
f = \sum_{v \in V}c_v x_v \quad \text{s.t.}\, 0 \leq x_v \leq 1,
\end{equation}
where $c_v=\pxz{1}{{\sim} z_v} - \pxz{0}{{\sim}z_v}$.
\subsection{Bounds on $X$}\label{sec:bound}
\def\tp{\Tilde{P}}
Suppose that $X^* = (\x^*_1, \cdots, \x^*_n)$ are the real attribute values. 
\begin{theorem}
Let $\tp(z)$ be the fraction of vertices reporting $1$ and $c = \pxz{1}{1} - \pxz{0}{1}$, 
\[
\tp(x) = \frac{\tp(z) - \pxz{0}{1}}{c}.
\]
Then, with high probability\footnote{If $\lim_{n \rightarrow \infty}P(a) = 1$, $a$ happens with high probability.}:
\begin{equation}
    \left|\tp(x) - \frac{1}{n}\sum \x^*_v\right| \leq \sqrt{\frac{\log{n}}{2nc^2}}.
\end{equation}
\end{theorem}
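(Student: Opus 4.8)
The plan is to recognize $\tp(x)$ as an \emph{unbiased} estimator of the empirical mean $\frac{1}{n}\sum_v \x^*_v$ and then to control its fluctuation with a standard concentration inequality. Throughout I treat the real values $X^*$ as fixed and take all randomness over the independent application of $\M$ to each individual.

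First I would compute the expectation. Writing $p = \frac{1}{n}\sum_v \x^*_v$, each report satisfies $\Pr(z_v = 1) = \pxz{1}{1}$ when $\x^*_v = 1$ and $\Pr(z_v=1)=\pxz{0}{1}$ when $\x^*_v = 0$, so $\mathbb{E}[z_v] = \pxz{0}{1} + \x^*_v\, c$ with $c = \pxz{1}{1} - \pxz{0}{1}$. Averaging over $v$ gives $\mathbb{E}[\tp(z)] = \pxz{0}{1} + pc$, and therefore $\mathbb{E}[\tp(x)] = p$ directly from the definition $\tp(x) = (\tp(z) - \pxz{0}{1})/c$. The payoff is that the quantity I must bound factors cleanly: $|\tp(x) - p| = |\tp(z) - \mathbb{E}[\tp(z)]|/|c|$, reducing everything to the deviation of the observable average $\tp(z)$ from its mean.

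Next I would bound that deviation. Since $\tp(z) = \frac{1}{n}\sum_v z_v$ is an average of $n$ mutually independent random variables, each supported on $\{0,1\} \subseteq [0,1]$, Hoeffding's inequality yields $\Pr[\,|\tp(z) - \mathbb{E}[\tp(z)]| \ge t\,] \le 2\exp(-2nt^2)$. Choosing $t = \sqrt{\log n / (2n)}$ makes the right-hand side exactly $2/n$, which tends to $0$ as $n \to \infty$. Dividing the complementary event $|\tp(z) - \mathbb{E}[\tp(z)]| \le t$ through by $|c|$ rewrites it as $|\tp(x) - p| \le \sqrt{\log n/(2nc^2)}$, so this bound holds with high probability, which is precisely the claim.

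The computation is essentially routine once the setup is fixed; the only point that genuinely needs care is justifying that the $z_v$ are \emph{mutually independent}, so that Hoeffding (rather than a weaker Chebyshev-style bound with a worse constant) can be invoked. This is exactly where the modeling assumption that $\M$ is applied to each individual's attribute separately enters. A secondary, minor point is that the estimator requires $c \neq 0$; for any non-degenerate mechanism $\pxz{1}{1} \neq \pxz{0}{1}$, and since only $c^2$ — not the sign of $c$ — appears in the final bound, no case analysis on the direction of the inequality is needed.
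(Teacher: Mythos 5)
Your proof is correct and follows essentially the same route as the paper's: identify $\tp(z)$ as the mean of $n$ independent bounded reports, compute its expectation in terms of $\frac{1}{n}\sum \x^*_v$, apply a Hoeffding/Chernoff bound, and choose the deviation $\sqrt{\log n/(2n)}$ so the failure probability vanishes. Your version is in fact slightly more careful than the paper's, making the unbiasedness of $\tp(x)$ explicit, carrying the two-sided factor of $2$ in the tail bound, and noting the $c\neq 0$ requirement, none of which changes the conclusion.
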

\begin{proof}
We can treat $\tp(z)$ as the mean of $n$ random variables, representing individual acts of reporting $0$ or $1$. The \emph{expected} value of $\tp(z)$ can be written as:
\begin{equation}\label{eq:exp_p_z}
    \mathbb{E}\left[\tp(z)\right] = \frac{1}{n}\sum \x^*_v \pxz{1}{1} + (1-\x^*_v)\pxz{0}{1}.
\end{equation}
Since $0 \leq z_v \leq 1$ and each individual report is independent of others, we can apply Chernoff's 
bound. By using \eqref{eq:exp_p_z} we have:
\begin{multline}\label{eq:p_z_bound}
   \Pr\Bigg[\Bigg| \frac{1}{n}\sum \Big(\x^*_v \pxz{1}{1} + \\(1-\x^*_v)\pxz{0}{1}\Big ) - \tp(z)\Bigg| \geq \epsilon \Bigg] \leq e^{-2n \epsilon^2},
\end{multline}
Using the definition of $\tp(x)$ in \eqref{eq:p_z_bound}, we have:
\begin{equation}\label{eq:p_x_bound}
    \Pr\left[\left|\tp(x) - \frac{1}{n}\sum \x^*_v\right| \geq \epsilon'\right] \leq e^{-2n {\epsilon'}^2 c^2}.
\end{equation}
Recall that $c = \pxz{1}{1} - \pxz{0}{1}$. The probability above is asymptotically zero when:
\begin{equation}
    \epsilon' = \sqrt{\frac{\log{n}}{2nc^2}}
\end{equation}
\end{proof}
The value of $\tp(z)$ can be estimated from data by $\left|\left\{v \in V\colon z_v = 1\right\}\right|/n$. We can now update our objective function
to accommodate this new constraint:
\begin{equation}\label{eq:obj_constraint}
\begin{aligned}
    \hat{\alpha} &{}= \argmin_{\alpha} \; \; \sum_{v \in V} c_v x_v\\
    \text{s.t.},\, & 0 \leq x_v \leq 1,\,  \left|\frac{1}{n}\sum_{v \in V} x_v - \tp(x)\right| \leq \sqrt{\frac{\log{n}}{2nc^2}},
\end{aligned}
\end{equation}
Although our solution finds soft probabilities ($x_v \in [0, 1]$) instead of discrete values ($\x_v \in \{0,1\}$), our experiments show that having this constraint can increase the accuracy of inferred values, especially when the amount of added noise is not extremely high (DP's $\varepsilon$ is not extremely low).

\subsection{Modelling Contagion}\label{sec:contagion_model}

With $\alpha$, we want to derive a formula for $x_v$, the probability that node $v$ is active in the end.
Computing the influence of contagion given a fixed $\alpha$
can be done in linear time for a DAG, using the following formula: 
(Lemma 3 \cite{Chen2010-lt}).
\begin{equation}\label{eq:ap_dag}
    x_v = \alpha_v + (1 - \alpha_v)\sum_{u \in \Nin(v)} w(u,v)x_u.
\end{equation}
Since the original graph $G$ is not necessarily a DAG, we find \emph{local DAGs} containing nodes who impose high influence. 
In this way, we try to benefit from the structural simplicity of DAGs, while losing minimal information.
The approach of using local structures to approximate the influence in a general graph has been widely used in prior works in the context of influence maximization~\cite{Chen2010-lt,local-arbor,local-neighbor}.
\def\nmax{N_{\text{max}}}
\begin{algorithm}[!tbh]
 \caption{Local DAG with target $t$. (Algorithm 3 \cite{Chen2010-lt})}
 \label{algo:ldag}
\begin{algorithmic}[1]
    \Require $G(V,E)$, Node $t$, $\eta$: Threshold for $\In$, $\nmax$: Max allowed nodes.
	\Ensure $D_t(V_t, E_t)$: The DAG around node $v$.
    \State \textbf{Initialization: } $V_t=\emptyset,\; E_t = \emptyset, \; \forall v \in V\colon \In(v,t) = 0,\; \In(t,t) = 1.$
    \While{$\max_{v\in V \setminus V_t}\In(v,t) \geq \eta \text{ and } |V_t| \leq \nmax$}
        \State $u \gets \argmax_{v\in V \setminus V_t}\In(v,t)$
        \State $E_t \gets E_t \cup \{(u,v)\mid v \in \Nout(u) \cap V_t\}$
        \State $V_t \gets V_t \cup u$
        \For{$v \in \Nin(u)$}\Comment{Neighbors' $\In$ is updated.}
            \State $\In(v, t) \; {+}{=} \; w(v, u) \In(u, t)$
        \EndFor
    \EndWhile
    \State \Return $D_t(V_t, E_t)$
\end{algorithmic}
\end{algorithm}

Algorithm \ref{algo:ldag} starts by the DAG $D_t$ containing only $t$. We then calculate the influence of each node $v$ on $t$, which
is the activation probability of $t$ if only $v$ was initially active and influence would only spread through nodes already in $D_t$. This is denoted by $\In(v,t)$.
At each step, the node outside of $D_t$ with highest $\In(.)$ is added to $D_t$ along with its outgoing edges that connect to nodes already in $D_t$. This is to ensure that the final $D_t$ is a DAG. We then update the influence of incoming neighbors of $v$ that are not yet in $D_t$. There can be two stopping criteria to the growing process: (1) When the influence of the most influential node falls below a threshold $\eta$, or (2) the number of nodes in $D_t$ grows bigger than a maximum allowed number, $\nmax$. If implemented using an efficient priority queue for $\In(v,t)$ values, Algorithm \ref{algo:ldag} runs in $O(|E_t|\log{|E_t|})$ time.
Among nodes in $D_t$, the \emph{local} activation probability, $\lp{t}{v}$, is as below:
\begin{equation}\label{eq:local_prob}
\lp{t}{v} = \alpha_v + (1 - \alpha_v)\sum_{u \in \Nin(v) \cap V_t} w(u,v)\lp{t}{u}.
\end{equation}
Note that $\lp{t}{t}$ is the probability that $t$ is activated only through nodes that are most influential on it and, as a result, can be considered to be a reasonable approximations of $x_t$. Our experiments show that this approach in selecting DAGs is essential to achieving high-quality results, and superior to alternative approaches.

Now we can move on to optimizing the objective function $f$ in Equation~\eqref{eq:f}. More specifically, we need to find $\frac{\partial f}{\partial \alpha_v}$ for all $v$.
Chen et al. have established that in a DAG, there is a linear relationship between $\lp{t}{t}$ and $\lp{t}{v}$ for all $v \in V_t$~\cite{Chen2010-lt}. The linear factor, which is equal to $\frac{\partial \lp{t}{t}}{\partial \lp{t}{v}}$ is computed as below:
\begin{equation}\label{eq:partial_x}
\frac{\partial \lp{t}{t}}{\partial \lp{t}{v}} = \sum_{u \in \Nout(v) \cap V_t} w(v,u)(1 - \alpha_u)\frac{\partial \lp{t}{t}}{\partial \lp{t}{u}}.
\end{equation}
We can compute the above for all nodes $v \in D_t$ by initially setting $\frac{\partial \lp{t}{t}}{\partial \lp{t}{t}}$ as $1$ and then going through nodes in reverse topological order. Finding this ordering and computing the partial gradients each takes $O(|E_t|)$ time. Next, we find gradients of $f$ based on each $\alpha_v$. Let $I_v = \{t \in V\colon v \in V_t\}$.
Then, by taking the gradient of \eqref{eq:f} and applying the chain rule we can write:
\begin{align}\label{eq:partial_f}
\frac{\partial f}{\partial \alpha_v} &= \sum_{t \in I_v} c_t \frac{\partial \lp{t}{t}}{\partial \alpha_v}\nonumber= \sum_{t \in I_v} c_t \frac{\partial \lp{t}{t}}{\partial \lp{t}{v}}\frac{\partial \lp{t}{v}}{\partial \alpha_v}\nonumber\\
&= \sum_{t \in I_v} c_t \frac{\partial \lp{t}{t}}{\partial \lp{t}{v}} \left(1 - \sum_{u \in \Nin(v) \cap V_t}\lp{t}{u}w(u,v)\right).
\end{align}
The last line is produced by taking a derivative of \eqref{eq:local_prob} by $\alpha_v$. Since $\alpha_v$ does not have any effect on the activation probability of predecessors of $v$ in any DAG, we can treat the summation on the right side of \eqref{eq:local_prob} as a constant with respect to $\alpha_v$. Calculating this summation is possible by dynamic programming when nodes are visited in their topological ordering. Computing values in \eqref{eq:partial_f} and \eqref{eq:partial_x} for all DAGs has a collective runtime of $O(\sum |E_t|)$.

\section{Experiments}\label{sec:experiments}

In this section, we test our method on both synthetic and real-world networks. We demonstrate that our proposed constraints in Section~\ref{sec:bound} and greedily retrieved DAGs described in Section~\ref{sec:contagion_model} play a key role in maintaining a high quality for our results.
We also investigate attributes that can indicate how vulnerable nodes are to such attacks, namely in-degree, out-degree and PageRank.

\def\codag{\texttt{CO-DAG}}
\def\odag{\texttt{O-DAG}}
\def\cornd{\texttt{CO-RND}}
\def\ornd{\texttt{O-RND}}
\def\lappas{\texttt{Lappas+}}
\def\bayes{\texttt{Bayesian}}

\subsection{Methods}\label{sec:baseline}
We tested the following methods in our experiments:
\benum
\item \textbf{\codag}: Our main method, which optimizes our objective subject to the constraints in \eqref{eq:obj_constraint} using DAGs retrieved by Algorithm~\ref{algo:ldag}.
\item \textbf{\odag}: Similar to \texttt{CO-DAG}, but without enforcing the constraint on $\sum x_v/n$.
\item \textbf{\cornd}: To show that our selected DAGs are essential to the high quality of our results, we repeat the experiments with a method similar to \codag~ but with DAGs that grow by adding random neighbors of nodes already in the DAG, until the number of nodes reaches a threshold $\nmax$.
\item \textbf{\ornd}: Similar to \cornd, but without enforcing the constraint on $\sum x_v/n$.
\item \textbf{\lappas}~\cite{lappas2010kEffectors}: The algorithm to find $k$-effectors, when $k$ is known beforehand. Note that by knowing $k$, this method has access to more information compared to others.
\item \textbf{\bayes}: Simple Bayesian inference as described in Corollary~\ref{cor:bayesian}.
\eenum

\subsection{Datasets}\label{sec:dataset}
We now introduce the datasets used in our experiments.

\smallskip\noindent\textbf{Synthetic Networks}:
We use $4$ types of randomly generated networks:
\benum
\item \textbf{Core-Periphery}~\cite{core-periphery}: A random network where nodes in a \emph{periphery} are loosely connected to a dense center. These networks are generated as Kronecker graphs~\cite{kronecker} with matrix parameter $\left[0.9, 0.5; 0.5, 0.3\right]$.
\item \textbf{Erdos-Renyi}: A random network in which all possible edges have equal probability $p$ to appear. We set $p$ such that the expected out-degree of every node will be $5$.
\item \textbf{Power-law}: A network with a power-law degree distribution where $f(x) \propto x^{-\gamma}$. We set $\gamma = 1$ to produce a degree sequence and ran configuration model~\cite{configuration-model} to obtain a network. 
\item \textbf{Hierarchical}~\cite{hierarchical}: Random hierarchies generated as Kronecker graphs with matrix parameter set to $\left[0.9, 0.1; 0.1, 0.9\right]$.
\eenum

We generate networks with $500$ nodes, remove from it self-loops and nodes having both in and out-degrees less than $3$ (except for Hierarchical networks).
We assign random influence weights in the according to Section V.A of \cite{Chen2010-lt}: Random numbers between (0, 1] are assigned to edges, then incoming links to each node is normalized to sum to 1.

\begin{table}[!htb]
    \caption{Real-world networks.}
    \centering
    \begin{tabular}{lcc}
        \toprule
         Network & \#Nodes & \#Edges \\
         \midrule
         GrQc & $2{,}422$ & $21{,}842$ \\
         HepTh & $4{,}909$ & $38{,}704$ \\
         Amazon Videos & $1{,}598$ & $8{,}402$ \\
         Amazon DVDs & $9{,}488$ & $55{,}146$ \\
         \bottomrule
    \end{tabular}
    \label{tab:stats}
\end{table}

\begin{table*}[!tbh]
\centering
\caption{AUC values of $7$ methods for $3$ values of $\beta$ across all networks. Note that (1) our method outperforms others in almost all cases and (2) \lappas~ fails to pass the upper bound in almost all cases, despite having access to more information.}
\begin{threeparttable}
\begin{tabular}{c@{\hskip 0.3in}lccccccccc}
\toprule
& Network & $\beta$ & $\varepsilon$ & Upper Bound & \bayes & \codag & \odag & \cornd & \ornd & \lappas \\
\midrule
\multirow[c]{20}{*}{\rotatebox[origin=c]{90}{\Large{Synthetic}}}
& \multirow[c]{5}{*}{Core-Periphery} & $0.1$ & $0.201$ & $0.550$ & $0.550$ & $0.575$ & $0.597$ & $0.568$ & $0.587$ & $\mathbf{0.614}$\\ 
& & $0.3$ & $0.619$ & $0.650$ & $0.648$ & $\mathbf{0.716}$ & $0.708$ &  $0.689$ & $0.699$ & $0.657$\\ 
& & $0.5$ & $1.099$ & $0.750$ & $0.751$ & $\mathbf{0.833}$ & $0.803$ & $0.802$ & $0.793$ & $0.695$\\ 
& & $0.7$ & $1.735$ & $0.850$ & $0.847$ & $\mathbf{0.904}$ & $0.870$ & $0.887$ & $0.868$ & $0.706$\\ 
& & $0.9$ & $2.944$ & $0.950$ & $0.951$ & $\mathbf{0.967}$ & $0.955$ & $0.965$ & $0.957$ & $0.711$\\ 
\cmidrule{2-11}
& \multirow[c]{5}{*}{Erdos Renyi} & $0.1$ & $0.201$ & $0.550$ & $0.545$ & $\mathbf{0.571}$ & $0.555$ & $0.553$ & $0.564$ & $0.527$\\ 
& & $0.3$ & $0.619$ & $0.650$ & $0.659$ & $\mathbf{0.704}$ & $0.699$ & $0.678$ & $0.685$ & $0.571$\\ 
& & $0.5$ & $1.099$ & $0.750$ & $0.752$ & $\mathbf{0.806}$ & $0.791$ & $0.781$ & $0.781$ & $0.587$\\ 
& & $0.7$ & $1.735$ & $0.850$ & $0.851$ & $\mathbf{0.897}$ & $0.874$ & $0.876$ & $0.870$ & $0.611$\\ 
& & $0.9$ & $2.944$ & $0.950$ & $0.949$ & $\mathbf{0.967}$ & $0.948$ & $0.960$ & $0.955$ & $0.620$\\ 
\cmidrule{2-11}
& \multirow[c]{5}{*}{Power-law Graph} & $0.1$ & $0.201$ & $0.550$ & $0.545$ & $0.590$ & $\mathbf{0.609}$ & $0.580$ & $0.587$ & $0.586$\\ 
& & $0.3$ & $0.619$ & $0.650$ & $0.646$ & $\mathbf{0.715}$ & $0.713$ & $0.689$ & $0.700$ & $0.672$\\ 
& & $0.5$ & $1.099$ & $0.750$ & $0.745$ & $\mathbf{0.813}$ & $0.805$ & $0.801$ & $0.795$ & $0.701$\\ 
& & $0.7$ & $1.735$ & $0.850$ & $0.850$ & $\mathbf{0.890}$ & $0.883$ & $0.884$ & $0.876$ & $0.743$\\ 
& & $0.9$ & $2.944$ & $0.950$ & $0.949$ & $\mathbf{0.959}$ & $0.948$ & $0.939$ & $0.953$ & $0.753$\\ 
\cmidrule{2-11}
& \multirow[c]{5}{*}{Hierarchical} & $0.1$ & $0.201$ & $0.550$ & $0.540$ & $\mathbf{0.602}$ & $0.584$ & $0.577$ & $0.591$ & $0.534$\\ 
& & $0.3$ & $0.619$ & $0.650$ & $0.652$ & $\mathbf{0.730}$ & $0.711$ & $0.699$ & $0.724$ & $0.573$\\ 
& & $0.5$ & $1.099$ & $0.750$ & $0.753$ & $\mathbf{0.821}$ & $0.803$ & $0.808$ & $0.796$ & $0.612$\\ 
& & $0.7$ & $1.735$ & $0.850$ & $0.852$ & $\mathbf{0.893}$ & $0.883$ & $0.887$ & $0.881$ & $0.669$\\ 
& & $0.9$ & $2.944$ & $0.950$ & $0.949$ & $\mathbf{0.964}$ & $0.961$ & $0.960$ & $0.963$ & $0.756$\\ 
\midrule
\multirow[c]{20}{*}{\rotatebox[origin=c]{90}{\Large{Real-world}}}
& \multirow[c]{5}{*}{GrQc} & $0.1$ & $0.201$ & $0.550$ & $0.554$ & $\mathbf{0.577}$ & $0.576$ & $0.564$ & $0.561$ & N/A\tnote{\textasteriskcentered}\\ 
& & $0.3$ & $0.619$ & $0.650$ & $0.644$ & $\mathbf{0.720}$ & $0.718$ & $0.683$ & $0.685$ & N/A\\ 
& & $0.5$ & $1.099$ & $0.750$ & $0.744$ & $0.833$ & $\mathbf{0.834}$ & $0.798$ & $0.797$ & N/A\\ 
& & $0.7$ & $1.735$ & $0.850$ & $0.849$ & $\mathbf{0.908}$ & $0.892$ & $0.892$ & $0.878$ & N/A\\ 
& & $0.9$ & $2.944$ & $0.950$ & $0.951$ & $\mathbf{0.973}$ & $0.960$ & $0.971$ & $0.965$ & N/A\\ 
\cmidrule{2-11}
& \multirow[c]{5}{*}{HepTh} & $0.1$ & $0.201$ & $0.550$ & $0.553$ & $\mathbf{0.584}$ & $0.579$ & $0.568$ & $0.566$ & N/A\\ 
& & $0.3$ & $0.619$ & $0.650$ & $0.658$ & $\mathbf{0.733}$ & $0.716$ & $0.698$ & $0.703$ & N/A\\ 
& & $0.5$ & $1.099$ & $0.750$ & $0.751$ & $\mathbf{0.831}$ & $0.825$ & $0.793$ & $0.793$ & N/A\\ 
& & $0.7$ & $1.735$ & $0.850$ & $0.849$ & $\mathbf{0.917}$ & $0.893$ & $0.884$ & $0.876$ & N/A\\ 
& & $0.9$ & $2.944$ & $0.950$ & $0.950$ & $\mathbf{0.972}$ & $0.958$ & $0.962$ & $0.962$ & N/A\\ 
\cmidrule{2-11}
& \multirow[c]{5}{*}{Amazon Videos} & $0.1$ & $0.201$ & $0.550$ & $0.554$ & $0.598$ & $\mathbf{0.604}$ & $0.584$ & $0.585$ & N/A\\ 
& & $0.3$ & $0.619$ & $0.650$ & $0.649$ & $\mathbf{0.748}$ & $0.748$ & $0.706$ & $0.714$ & N/A\\ 
& & $0.5$ & $1.099$ & $0.750$ & $0.747$ & $\mathbf{0.854}$ & $0.836$ & $0.828$ & $0.812$ & N/A\\ 
& & $0.7$ & $1.735$ & $0.850$ & $0.849$ & $\mathbf{0.919}$ & $0.899$ & $0.906$ & $0.894$ & N/A\\ 
& & $0.9$ & $2.944$ & $0.950$ & $0.950$ & $\mathbf{0.975}$ & $0.962$ & $0.974$ & $0.967$ & N/A\\ 
\cmidrule{2-11}
& \multirow[c]{5}{*}{Amazon DVDs} & $0.1$ & $0.201$ & $0.550$ & $0.551$ & $0.576$ & $\mathbf{0.584}$ & $0.567$ & $0.570$ & N/A\\ 
& & $0.3$ & $0.619$ & $0.650$ & $0.652$ & $0.735$ & $\mathbf{0.739}$ & $0.699$ & $0.707$ & N/A\\ 
& & $0.5$ & $1.099$ & $0.750$ & $0.750$ & $\mathbf{0.830}$ & $0.829$ & $0.806$ & $0.802$ & N/A\\ 
& & $0.7$ & $1.735$ & $0.850$ & $0.851$ & $\mathbf{0.906}$ & $0.897$ & $0.894$ & $0.886$ & N/A\\ 
& & $0.9$ & $2.944$ & $0.950$ & $0.952$ & $\mathbf{0.963}$ & $0.957$ & $0.957$ & $0.941$ & N/A\\ 
\bottomrule
\end{tabular}
\begin{tablenotes}
 \footnotesize
 \item[\textasteriskcentered] Due to the large size of real-world networks, performing experiments with \lappas~ was not viable.
\end{tablenotes}
\end{threeparttable}
\label{tab:rocs}
\end{table*}

\smallskip\noindent\textbf{Real-world Networks}:
We also test our method on real-world networks that have been extensively studied in the context of influence maximization before (Table~\ref{tab:stats}):
two co-authorship networks 
GrQc and HepTh~\cite{collaboration} and two co-purchase networks in Amazon, one containing videos and the other DVDs~\cite{amazon, aria-ties}. 

\subsection{Experiment Setting}\label{sec:setting}
We simulate $10$ cascades for each generated network.
For each simulated cascade, we choose a specific number of initially active nodes (seed nodes) at random. These numbers are selected in a way that makes it easier to produce cascades of reasonable size in the networks (with at least $1/4$ and at most $3/4$ of all nodes). For Hierarchical networks, $50$ and for the rest of synthetic networks $5$ seed nodes were chosen. For real-world networks, we selected $5\%$ of nodes at random since the networks' size greatly varied.
We then simulate a cascade using the \emph{triggering set} approach in Theorem 4.6 of ~\cite{kempe2003maximizing}: Each edge $(u,v)$ is kept with probability $w(u,v)$ and tossed otherwise. Then, any node reachable from any of the seed nodes is deemed active.

To produce differentially private perturbations for the final $x_v$ values, we choose the Randomized Response (RR) mechanism~\cite{rr-original}. Given a value $x \in \{0, 1\}$ an RR mechanism works as below:
\begin{equation}\label{eq:rr}
    \mathcal{RR}(x) = \begin{cases}
        x, &\text{with probability $\beta$};\\
        1, &\text{with probability $(1-\beta)/2$};\\
        0. &\text{with probability $(1-\beta)/2$}.
    \end{cases}
\end{equation}
where $\beta$ is a parameter that determines the rate by which respondents report the \emph{true} value $x$. A RR mechanism with parameter $\beta$ is a DP mechanism with $\varepsilon = \log{\left((1 + \beta)/(1 - \beta)\right)}$.

\begin{figure*}[!th]
     \centering
     \includegraphics[width=0.99\textwidth]{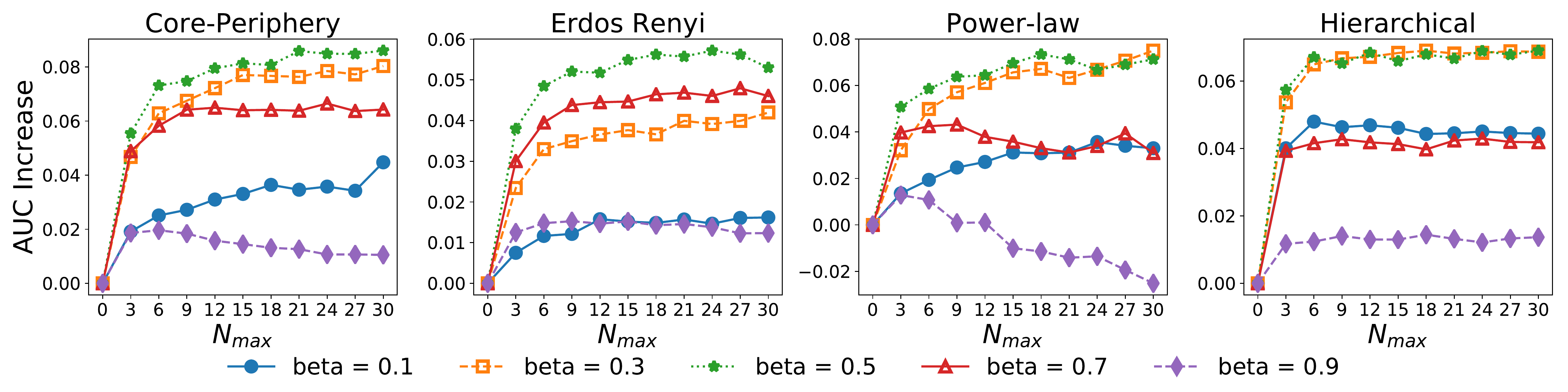}
     \caption{Impact of maximum allowed nodes ($\nmax$) in each DAG on increase in AUC scores relative to $\nmax = 3$.}
     \label{fig:node_roc}
\end{figure*}
\begin{figure*}[!tbh]
	\centering
	\includegraphics[width=0.99\textwidth]{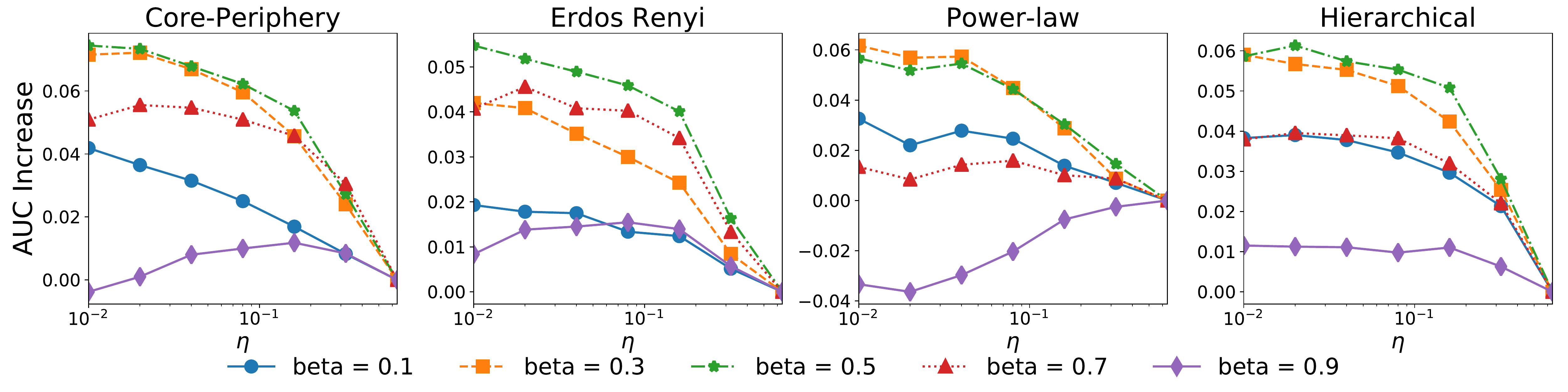}
	\caption{Impact of threshold on $\In$, $\eta$ on increase in AUC scores relative to $\eta = 0.01$.}
	\label{fig:prob_roc}
\end{figure*}
\def\expacc{\mathbb{E}[\text{Acc}]}
\begin{figure*}[!tbh]
    \centering
    \includegraphics[width=1\textwidth]{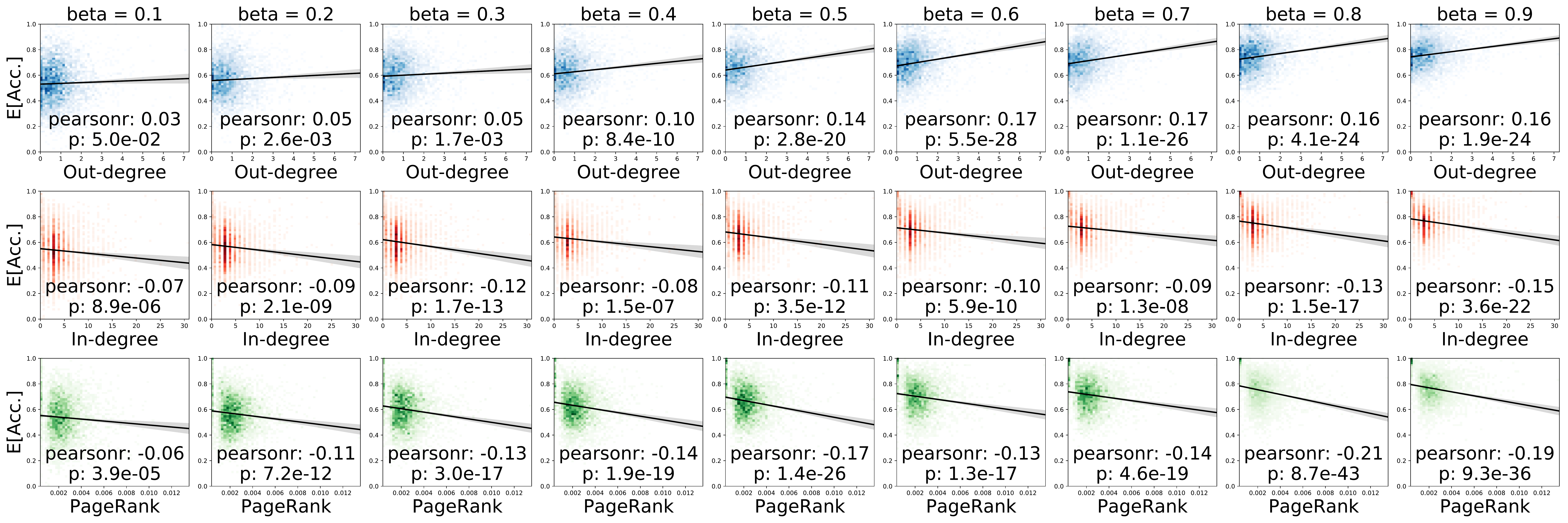}
    \caption{The correlation between $\expacc$ and from top to bottom (1) out-degree, (2) in-degree, and (3) PageRank.}
    \label{fig:metrics_vs_acc}
\end{figure*}
\begin{figure*}[!tbh]
    \centering
    \includegraphics[width=1\textwidth]{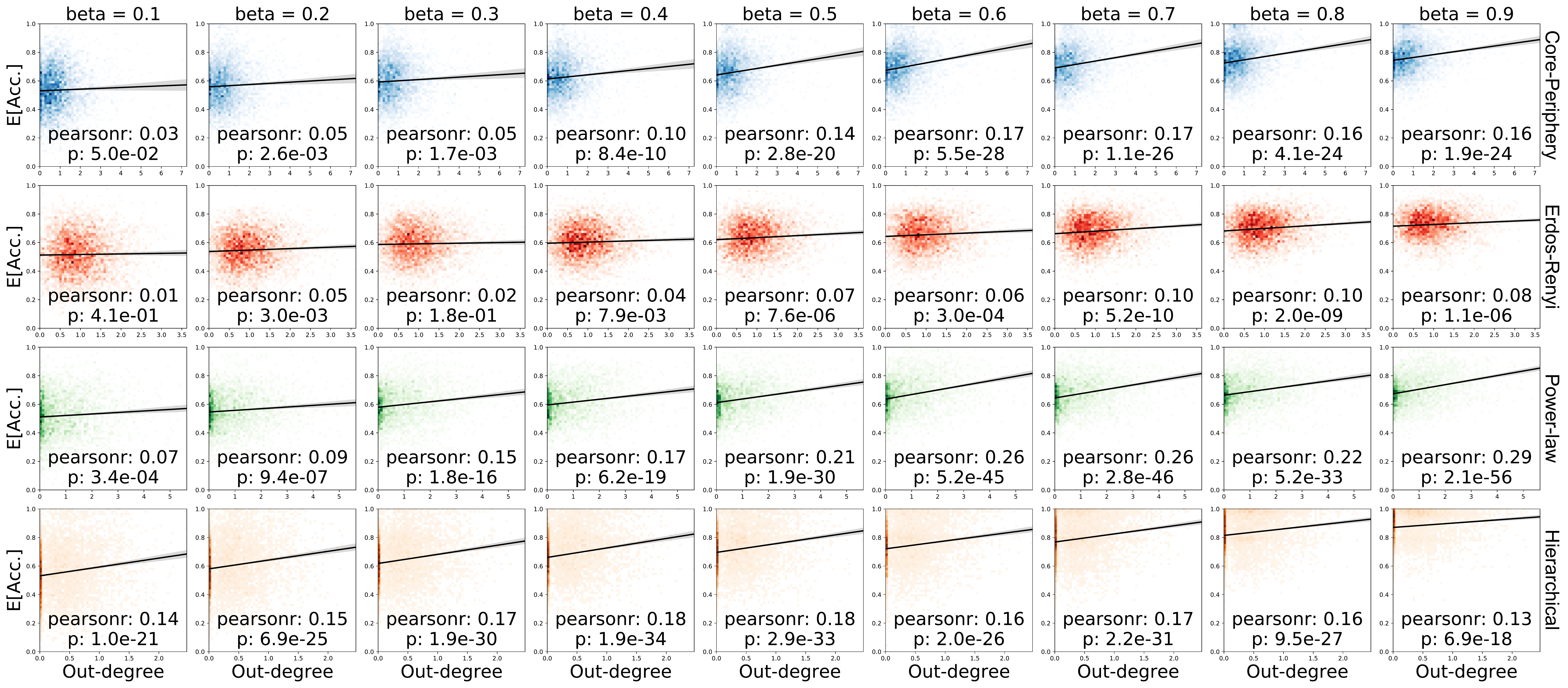}
    \caption{The correlation between out-degree ($x$-axis) and $\expacc$. Each column represents experiments with the same $\beta$.
    }
    \label{fig:graphs_vs_acc}
\end{figure*}
To do the inference, we need to extract DAGs for each node. For random DAGs (used in \ornd~and \cornd), we use the average number of nodes in their corresponding greedy DAGs (used in \odag~and \codag) as node capacity. To optimize \eqref{eq:obj_constraint}, we use ALGLIB\footnote{ALGLIB (www.alglib.net), Sergey Bochkanov} library.
\subsection{Going Above the Bound}
In Section~\ref{sec:problem} we proved that, without extra information about the contagion, it is impossible to achieve an AUC higher than $1 - (1 - \delta)/(1 + e^\varepsilon)$ on binary attributes perturbed by a $(\varepsilon,\delta)$-DP mechanism. Here, we test this guarantee after the contagion graph is known to an adversary. For $5$ values of RR's $\beta$
we perform inference using methods described in Section~\ref{sec:baseline} and report the resulting AUC scores, the corresponding $\varepsilon$ and theoretical upper bounds in Table~\ref{tab:rocs}. 

As expected, we observe that the AUC scores for \bayes is close to the theoretical upper bound. Furthermore, note that in almost all cases \codag~achieves the highest AUC score and in\emph{all} cases the value is beyond the AUC upper bound with DP guarantees. For the medium range of $\beta$
this difference becomes significant. We also observe that in cases where $\beta$ is extremely low and the perturbed data is exremely noisy, \odag~ tends to outperform \codag. We believe that this is due to the deteriorating quality of the bound on $\sum x_v/n$ when the added noise becomes too big.
It is also worth noting that our method's capability to go beyond the bound does not come trivially, since \lappas~ fails to do so in all but one cases, even as it receives \emph{more} information as input (number of seeds, $k$). Finally, notice that in virtually all cases, dropping the greedy DAGs or the constraint on $\sum x_v/n$ hurts performance.
\subsection{Impacts of DAG's size on Inference Quality}
We now test the impact of changing $\eta$ (the threshold on nodes' influence) or $\nmax$ (maximum allowed nodes) of DAGs on the quality of inferred values. 

We iterate over an increasing $\nmax$ with a fixed RR's $\beta$ across $4$ synthetic networks. Since we are interested in the change in resulting AUC scores and not their absolute values, we move each resulting curve so that their starting points would land on $0.00$. 
Similarly for $\eta$, we start from $\eta = 0.01$ and exponentially grow the threshold to $0.64$ while $\beta$ is fixed. To make it easier to observe the change in AUC scores, we collocate the ending point of all curves on $0.00$.

The results for $\nmax$ and $\eta$ are depicted in Figures~\ref{fig:node_roc} and \ref{fig:prob_roc} respectively. Intuitively, one might expect that larger DAGs will always lead to more accurate results, albeit with some additional computational cost. This is \emph{almost} true in both experiments. Interestingly, in some of the networks
when the perturbation is minimal ($\beta = 0.9, \varepsilon=2.94$), the quality actually drops as DAGs grow larger. This means that having a more \emph{selective} DAG, with more restrictive criteria, can sometimes yield better results.

\subsection{Who is More Vulnerable?}
The AUC value describes the effectiveness of our inference algorithm but does not provide details on individual vertex level.
Here we look into attributes that can indicate a node's vulnerability to such an inference attack.
Without access to ground truth values of $\x_v$, finding the best cutoff threshold of $x_v$ for classification is not possible. As an alternative, we calculate the \emph{expected accuracy} of a single node's inferred value if the threshold $\theta$ is selected at random from $[0,1]$:
\begin{equation}\label{eq:exp_acc}
    \expacc = \mathrm{E}_{\theta \sim [0, 1]}\left[ I( \x_v = 1)x_v + I(\x_v = 0)(1 - x_v) \right],
\end{equation}
where $I(\cdot)$ is an indicator function.
Note that we are not interested in the absolute values of $\expacc$, but the relative difference between different nodes, in order to compare nodes' vulnerability to our attack.

We tested $3$ node attributes related to the centrality and importance of a node in a complex network:
(1) (Weighted) Out-degree, (2) (Weighted) In-degree, and (3) PageRank~\cite{pagerank}.
We limit our visualizations to Core-Periphery networks due to limited space. Results are similar for the other 3 networks. 
We test $9$ values of $\beta = \{ 0.1, 0.2, \cdots, 0.9 \}$ and infer the real values using \codag. In Figure~\ref{fig:metrics_vs_acc} the resulting $\expacc$ for each node $v$ is plotted against the $3$ metrics mentioned above along with correlation analysis printed on each plot. Each column contains plots with a similar $\beta$ value, printed on the top.
Among the $3$ metrics, two of them (in-degree and PageRank) show a negative correlation, while a strong correlation is observed between $\expacc$ and out-degree which grows even stronger as $\beta$ is increased and the added noise becomes minimal.
A reasonable explanation is that when $v$ has a high impact on many nodes in the network, it essentially sends signals of its true value to those nodes during the spread of the contagion. These signals, each insignificant on its own, can collectively reveal the true value of $x_v$ to a great extent. Of course, these signals only grow stronger when the amount of added noise to each report is reduced as $\beta$ grows bigger. In contrast, there might be many signals on the influence received by a node with a high in-degree, but without that influence spreading around, we only have the report by that node as a single indicator of its true value which is masked well by conventional privacy protection techniques.

We repeat this experiment on other synthetic networks to see if we observe similar results.
In Figure~\ref{fig:graphs_vs_acc}, $\expacc$ is plotted against out-degree of all nodes for $\beta$ between $0.1$ and $0.9$. A similar trend is visible across all $4$ types of network, with different intensity. 
In the $3$ networks where the distribution of influence is skewed,
namely Core-Periphery, Power-law and Hierarchical, the correlation is strong.
Moreover, as $\beta$ increases and the added noise become smaller,
the correlation grows stronger. In Erdos-Renyi networks, where degrees are distributed more evenly, the difference in $\expacc$ among nodes is minor and as a result, the correlation is weaker.
Notice that in Hierarchical network, this \emph{unequal} distribution of influence reaches its extreme point, where the majority of the nodes are leaves, with \emph{no} impact on others, while a handful of nodes at the top  influence many others. This radical difference
leads to a constant unbalance between $\expacc$ of the two groups across all values of $\beta$.

\section{Conclusion and Future Work}\label{sec:conclusion}
In this work, we provide further evidence that a privacy-protecting measure that is oblivious to socially contagious properties of attributes is unlikely to provide guarantees in practice as advertised.
There are two obvious directions for future research: 1) design a privacy protection mechanism that is aware of the contagious property of some attributes and employs models of contagions in its design; Our method can be used to test if that method succeeds. 2) extend the results to other contagion models. Notice that the extension to any progressive contagion model (where active nodes stay active) with a differentiable formula (e.g, Independent Cascade model) is easily possible. However, ideas specific to those models are required for efficient implementation and model design.



\section*{Acknowledgement}
Aria Rezaei and Jie Gao acknowledge support through NSF DMS-1737812, NSF CNS-1618391 and NSF CCF-1535900.

\bibliographystyle{IEEEtran}
\bibliography{main}

\end{document}